\numberwithin{equation}{section}
\newtheorem{theorem}{Theorem}[section]
\newtheorem{proposition}[theorem]{Proposition}
\theoremstyle{definition}
\newtheorem{definition}[theorem]{Definition}
\newtheorem{example}[theorem]{Example}
\newenvironment{remark}[1][Remark.]{\begin{trivlist}
\item[\hskip \labelsep {\bfseries #1}]  }{ \end{trivlist}}
\newcommand{\Id}{\mathbbmss{1}}
\DeclareMathOperator{\Hom}{Hom}
\newcommand{\catname}[1]{\textnormal{\texttt{#1}}}
\font\black=cmbx10 \font\sblack=cmbx7 \font\ssblack=cmbx5 \font\blackital=cmmib10  \skewchar\blackital='177
\font\sblackital=cmmib7 \skewchar\sblackital='177 \font\ssblackital=cmmib5 \skewchar\ssblackital='177
\font\sanss=cmss10 \font\ssanss=cmss8 
\font\sssanss=cmss8 scaled 600 \font\blackboard=msbm10 \font\sblackboard=msbm7 \font\ssblackboard=msbm5
\font\caligr=eusm10 \font\scaligr=eusm7 \font\sscaligr=eusm5  \font\fraktur=eufm10
\font\sfraktur=eufm7 \font\ssfraktur=eufm5 
\font\bsymb=cmsy10 scaled\magstep2
\def\all#1{\setbox0=\hbox{\lower1.5pt\hbox{\bsymb
       \char"38}}\setbox1=\hbox{$_{#1}$} \box0\lower2pt\box1\;}
\def\exi#1{\setbox0=\hbox{\lower1.5pt\hbox{\bsymb \char"39}}
       \setbox1=\hbox{$_{#1}$} \box0\lower2pt\box1\;}
\def\tx#1{{\fam0\relax#1}}
\def\sss#1{{\fam\ssfam\relax#1}}
\def\hpb#1{\setbox0=\hbox{${#1}$}
    \copy0 \kern-\wd0 \kern.2pt \box0}
\def\vpb#1{\setbox0=\hbox{${#1}$}
    \copy0 \kern-\wd0 \raise.08pt \box0}
\def\pmb#1{\setbox0\hbox{${#1}$} \copy0 \kern-\wd0 \kern.2pt \box0}
\def\pmbb#1{\setbox0\hbox{${#1}$} \copy0 \kern-\wd0
      \kern.2pt \copy0 \kern-\wd0 \kern.2pt \box0}
\def\pmbbb#1{\setbox0\hbox{${#1}$} \copy0 \kern-\wd0
      \kern.2pt \copy0 \kern-\wd0 \kern.2pt
    \copy0 \kern-\wd0 \kern.2pt \box0}
\def\pmxb#1{\setbox0\hbox{${#1}$} \copy0 \kern-\wd0
      \kern.2pt \copy0 \kern-\wd0 \kern.2pt
      \copy0 \kern-\wd0 \kern.2pt \copy0 \kern-\wd0 \kern.2pt \box0}
\def\pmxbb#1{\setbox0\hbox{${#1}$} \copy0 \kern-\wd0 \kern.2pt
      \copy0 \kern-\wd0 \kern.2pt
      \copy0 \kern-\wd0 \kern.2pt \copy0 \kern-\wd0 \kern.2pt
      \copy0 \kern-\wd0 \kern.2pt \box0}
\mathchardef\za="710B  
\mathchardef\zb="710C  
\mathchardef\zg="710D  
\mathchardef\zd="710E  
\mathchardef\zve="710F 
\mathchardef\zz="7110  
\mathchardef\zh="7111  
\mathchardef\zvy="7112 
\mathchardef\zi="7113  
\mathchardef\zk="7114  
\mathchardef\zl="7115  
\mathchardef\zm="7116  
\mathchardef\zn="7117  
\mathchardef\zx="7118  
\mathchardef\zp="7119  
\mathchardef\zr="711A  
\mathchardef\zs="711B  
\mathchardef\zt="711C  
\mathchardef\zu="711D  
\mathchardef\zvf="711E 
\mathchardef\zq="711F  
\mathchardef\zc="7120  
\mathchardef\zw="7121  
\mathchardef\ze="7122  
\mathchardef\zy="7123  
\mathchardef\zf="7124  
\mathchardef\zvr="7125 
\mathchardef\zvs="7126 
\mathchardef\zf="7127  
\mathchardef\zG="7000  
\mathchardef\zD="7001  
\mathchardef\zY="7002  
\mathchardef\zL="7003  
\mathchardef\zX="7004  
\mathchardef\zP="7005  
\mathchardef\zS="7006  
\mathchardef\zU="7007  
\mathchardef\zF="7008  
\mathchardef\zW="700A  
\mathchardef\zC="7009  
\newcommand{\be}{\begin{equation}}
\newcommand{\ee}{\end{equation}}
\newcommand{\bea}{\begin{eqnarray}}
\newcommand{\eea}{\end{eqnarray}}
\def\*{{\textstyle *}}
\newcommand{\R}{{\mathbb R}}
\newcommand{\Z}{{\mathbb Z}}
\newcommand{\s}{{\textstyle *}}
\def\Hom{\sss{Hom}}
\def\xi{\tx{i}}
\def\s*{{\scriptstyle *}}
\def\cO{\mathcal{O}}
\newcommand{\beas}{\begin{eqnarray*}}
\newcommand{\eeas}{\end{eqnarray*}}
\title{Lie Superheaps and their Groupification} 
\author{Andrew James Bruce }  
   \email{andrewjamesbruce@googlemail.com}
   \date{\today}
\begin{document}
 \maketitle
\vspace{-20pt}
\begin{abstract}{\noindent We introduce the notion of a  Lie superheaps as a generalisation of Lie supergroups.  We show that the well-known `groupification' and `heapification' functors generalise to the ambience of supergeometry. In particular, we show that there is an isomorphism between the categories of pointed Lie superheaps and Lie supergroups. To do this, we make extensive use of the functor of points.}\\

\noindent {\Small \textbf{Keywords:} Heaps;~Semiheaps;~Supermanifolds;~ Lie Supergroups;~Generalised Associativity}\\
\noindent {\small \textbf{MSC 2020:} 20N10;~22E15;~58A50 }
\end{abstract}

\section{Introduction and Background}
\subsection{Introduction}
In the 1920s, Pr\"{u}fer \cite{Prufer:1924} and  Baer \cite{Baer:1929} introduced the notion of \emph{heaps} (which are also referred to as torsors, grouds, or herds) as a set with a ternary operation that satisfies a generalisation of associativity known as para-associativity.  A heap can be thought of as a group in which the identity element has been removed. Any group can be turned into a heap by defining the ternary operation as $(x,y,z) \mapsto xy^{-1}z$. Up to isomorphism, every heap arises from a group via this construction. In the other direction, by selecting an element in a heap, one can transform the ternary operation into a group operation, such that the chosen element becomes the identity element. Note, we do not quite have an equivalence of categories, as passing from a heap to a group is not natural. However, there is an isomorphism of categories between pointed heaps and groups.\par 
In an earlier publication (see \cite{Bruce:2024}), the author defined and made an initial study of Lie semiheaps, i.e., semiheaps in the category of smooth manifolds. One of the key results of that paper is the isomorphism of categories between pointed Lie heaps and Lie groups; this extends the algebraic isomorphism to the setting of smooth manifolds (see \cite[Theorem 2.31]{Bruce:2024}). In this note, we extend the notion of Lie semiheaps to the setting of supergeometry. That is, we construct \emph{Lie supersemiheaps} as semiheaps in the category of smooth supermanifolds. The potential difficulty lies in the fact that supermanifolds are not set-theoretical objects. As suggested in \cite{Bruce:2024}, Grothendieck's functor of points is essential in our approach as it allows us to deal with set-theoretical objects and closely mimic the constructions/proofs found in the algebraic setting, albeit at the cost of another level of abstraction. Just as we can define  Lie supergroups as particular group-valued functors, Lie supersemiheaps we define as particular semiheap-valued functors. The main result of this note is that there is an isomorphism between the categories of  Lie supergroups and pointed  Lie superheaps; see Theorem \ref{trm:IsoCats}. \par 
Our main reference for heaps and related structures such as semiheaps is Hollings \& Lawson \cite{Hollings:2017}. For categorical notions, we defer to Mac Lane \cite{MacLane:1988}.  We will freely call on the following works for fundamental results in the theory of supermanifolds \cite{Carmeli:2011,Manin:1997,Varadrajan:2004}.

\subsection{Semiheaps and Heaps}
\begin{definition}
A \emph{semiheap} is a set $S$, possibly empty, equipped with a ternary product
\begin{align*}
\mu :\,& S\times S\times S \longrightarrow S\\
     & (x,y,z) \longmapsto [x,y,z]\,,
\end{align*}
that is para-associative, 
\begin{equation*}
\big[ [x_1,x_2,x_3] , x_4,x_5 \big] = \big [ x_1,[x_4,x_3,x_2],x_5\big] =  \big[ x_1,x_2,[x_3 , x_4,x_5] \big]\, ,
\end{equation*}
for all $x_i \in S$. If $[x,x,y]= [y,x,x] = y$ for all $x,y \in S$, then we speak of a \emph{heap}. A \emph{ semiheap homomorphism} is a map $\phi : S \rightarrow S'$ that respects the ternary product, i.e.,
$$\phi[x,y,z] = [\phi(x), \phi(y), \phi(z)]'\,,$$
for all $x,y,z \in S$. We will denote the resulting categories of semiheaps and heaps as $\catname{SHeap}$ and $\catname{Heap}$. A \emph{pointed (semi)heap} is a (semi)heap with a distinguished point. We will insist that \emph{homomorphisms of pointed (semi)heaps} respect the distinguished points. The resulting category will be denoted $\catname{(S)Heap}_*$.
\end{definition}

\subsection{Supermanifolds and the Functor of Points}
We will assume that the reader has a grasp of the basic theory of supermanifolds. A  \emph{supermanifold} $M := (|M|, \:  \cO_{M})$ of dimension $n|m$, we understand to be a supermanifold as defined by Berezin \& Leites  \cite{Berezin:1976,Leites:1980}, i.e., as a locally superringed space that is locally isomorphic to $\mathbb{R}^{n|m} := \big (\R^{n}, C^{\infty}_{\R^{n}}(-)\otimes \Lambda(\zx^{1}, \cdots \zx^{m}) \big)$. Here, $C^{\infty}_{\R^{n}}(-)$ is the sheaf of smooth real functions on $\R^n$ and $\Lambda(\zx^{1}, \cdots \zx^{m})$ is the Grassmann algebra (over $\R$) with $m$ generators.  The underlying topological space $|M|$ is, in fact, a smooth manifold. This manifold we refer to as the \emph{reduced manifold}. Morphisms of supermanifolds are morphisms as superringed spaces. That is, a morphism $\phi : M \rightarrow N$ consists of a pair $ \phi = (|\phi|, \phi^*  )$, where $|\phi| : |M| \rightarrow |N|$ is a continuous map (in fact, smooth) and  $\phi^*$  is a family of superring homomorphisms $\phi^*_{|V|} : \cO_N(|V|) \rightarrow \cO_M\big( |\phi|^{-1}(|V|)\big)$, for every open $|V| \subset |N|$, that respect the restriction maps. The category of supermanifolds we denote as $\catname{SMan}$. \par
An important and useful result is that 
$$\Hom_{\catname{SMan}}(M, 
N) \cong \Hom_{\catname{SAlg}}(\cO_N(|N|), \cO_M(|M|))\,.$$
where $\catname{SAlg}$ is the category of unital, supercommutative, associative algebras. This allows one to understand morphisms of supermanifolds in terms of homomorphisms between the algebras of global sections.\par 
Grothendieck's functorial approach to algebraic geometry has proven extremely useful in supergeometry and will be used extensively in this note. For any $T = (|T|, \cO_T) \in \catname{SMan}$, the set of \emph{$T$-points of $M$} is defined as 
$$M(T) := \Hom_{\catname{SMan}}(T, M) \cong \Hom_{\catname{SAlg}}(\cO_M(|M|), \cO_T(|T|))\,. $$
That is, we can understand a supermanifold as a contravariant functor $M(-) : \Z_2^n\catname{Man}^{\textrm{op}} \rightarrow \catname{Set}$ from the category of supermanifolds to sets. A fundamental result here is \emph{Yoneda's lemma}, which states that any morphism $\phi :  M \rightarrow N$ defines a natural transformation $\phi_- : M(-) \rightarrow N(-)$ and any natural transformation between $M(-)$ and $N(-)$ arises from a unique morphism between the supermanifolds. Thus, using this categorical point of view captures all the information about supermanifolds and their morphisms. \par 
The Cartesian product of two supermanifolds $M \times N$ is well-defined: the construction of the Cartesian product of manifolds via atlases generalises to supermanifolds.  The universal properties of the Cartesian product of supermanifolds give the useful result that
$$\big( M \times N \big)(T) \cong M(T) \times N(T)\,,$$
which is essential in the definition of a Lie supergroup, and our definition of Lie supersemiheaps and Lie superheaps.\par 
A  Lie supergroup $(G, \mathsf{m}, \mathsf{i} , \mathsf{e})$ can be understood as a supermanifold equipped with three maps, \emph{multiplication, inverse} and \emph{unit},
$$ \mathsf{m} : G \times G \rightarrow G\,, \qquad \mathsf{i}: G \rightarrow G\,, \qquad \mathsf{e} : \R^{0|0}\rightarrow G\,,$$
which satisfy the axioms of a group (see \cite[page 112--113]{Carmeli:2011} for details). A more convenient description is via the functor of points. In particular, we consider a  Lie supergroup as a functor 
$$G(-) : \catname{Sman}^{\textrm{op}} \rightarrow \catname{Grp}\,.$$
\subsection{Pointed Supermanifolds}
We will require pointed supermanifolds. While the notion is clear, we will need to reformulate this to be amenable to the functor of points.
\begin{definition}
A \emph{pointed supermanifold} is a pair $(M, \mathsf{m})$, where $M = (|M|, \cO_M)$ is a supermanifold and $\mathsf{m} :\R^{0|0} \rightarrow M$ is a morphism of supermanifolds.
\end{definition}
Note that the morphism
$$\mathsf{m} :\R^{0|0} =(\{* \}, \R) \longrightarrow M\,,$$
is defined by sending the point $*$ to $m \in |M|$ and $\mathsf{m}^* : \cO_{M,m} \rightarrow \R$ is the evaluation at $m$. This notion of a pointed supermanifold is equivalent to a supermanifold with a distinguished topological point.
\begin{proposition}
If $(M, \mathsf{m})$ is a pointed supermanifold, then $M(T)$ is a pointed set for all $T \in \catname{Sman}$.
\end{proposition}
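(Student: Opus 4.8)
The plan is to produce, for each $T \in \catname{SMan}$, a distinguished element of the set $M(T)$ in a way that is functorial, and then observe that functoriality is exactly what is needed for the natural transformations of interest (semiheap homomorphisms, etc.) to be maps of pointed sets. First I would use the morphism $\mathsf{m} : \R^{0|0} \to M$ to define, for any $T$, a composite $T \to \R^{0|0} \to M$, where $T \to \R^{0|0}$ is the unique morphism to the terminal object of $\catname{SMan}$ (recall $\R^{0|0} = (\{*\}, \R)$ is terminal, since $\Hom_{\catname{SMan}}(T, \R^{0|0}) \cong \Hom_{\catname{SAlg}}(\R, \cO_T(|T|))$ is a singleton as $\R$ is initial in $\catname{SAlg}$). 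Denote this composite $\mathsf{m}_T \in M(T)$; this is the distinguished point making $M(T)$ a pointed set.

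The second step is to spell out why $\R^{0|0}$ is terminal, since that is the one genuinely load-bearing fact: $\Hom_{\catname{SAlg}}(\R, A)$ has exactly one element for every object $A$ of $\catname{SAlg}$ because a unital algebra homomorphism out of $\R$ is forced on $1$ and hence everywhere. Equivalently, in the language already set up in the excerpt, $\R^{0|0}(T) = \Hom_{\catname{SMan}}(T, \R^{0|0})$ is a one-point set for all $T$, so the functor $\R^{0|0}(-)$ is the terminal functor and admits a unique natural transformation from any $M(-)$. Composing the structure map $\mathsf{m}_- : \R^{0|0}(-) \to M(-)$ (which exists by Yoneda applied to $\mathsf{m}$) with this canonical choice of point in each $\R^{0|0}(T)$ yields the basepoint $\mathsf{m}_T \in M(T)$, and naturality of $\mathsf{m}_-$ guarantees that for any morphism $f : T' \to T$ the induced map $M(f) : M(T) \to M(T')$ sends $\mathsf{m}_T$ to $\mathsf{m}_{T'}$, i.e. $M(-)$ is a functor into the category of pointed sets.

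I do not anticipate a serious obstacle here; the proposition is essentially a bookkeeping statement that unwinds the definition of a pointed supermanifold through the functor of points. The only point requiring a line of care is the terminality of $\R^{0|0}$, and perhaps a remark that the topological point picked out by $\mathsf{m}_T$ in $|M|$ is indeed $m$ (so that this basepoint agrees with the naive notion of a distinguished point), which follows immediately from the description of $\mathsf{m}^*$ as evaluation at $m$ given just before the statement. I would therefore write the proof in two short paragraphs: one establishing $\R^{0|0}$ as terminal and hence producing $\mathsf{m}_T$, and one checking naturality of the assignment $T \mapsto \mathsf{m}_T$ under the functor $M(-)$.
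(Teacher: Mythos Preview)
Your proposal is correct and follows essentially the same approach as the paper: both use that $\R^{0|0}$ is terminal in $\catname{SMan}$ to produce the distinguished $T$-point $\mathsf{m}_T := \mathsf{m} \circ !_T \in M(T)$. The only difference is that you also include the naturality check (that $M(f)$ sends $\mathsf{m}_T$ to $\mathsf{m}_{T'}$), which the paper separates out as a subsequent proposition rather than part of this one; for the statement as written, your first paragraph already suffices.
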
 
\begin{proof}
Recall that $\R^{0|0}$ is a terminal object in $\catname{SMan}$. The unique map $!_T : T \rightarrow \R^{0|0}$ corresponds to the algebra morphism $!_T^\# : \R \rightarrow \cO_T(|T|)$ which sends 
$$\R \ni r\cdot 1 \longmapsto r\cdot \Id_T \in \cO_T(|T|)\,,$$
where $\Id_T$ is the unit function.  Then, $\mathsf{m} \circ !_T  =: \mathsf{m}_T \in M(T)$ gives the a distinguished $T$-point canonically from $\mathsf{m}$. 
\end{proof}
\begin{proposition}
Let $(M, \mathsf{m})$ be a pointed supermanifold. Then under change of parametrisation $\psi \in \Hom_{\catname{SMan}}(T', T)$, the map $\psi^M :M(T)\rightarrow M(T')$ is a map of pointed sets, i.e., $\mathsf{m}_T \mapsto \mathsf{m}_{T'}$.
\end{proposition}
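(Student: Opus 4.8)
The plan is to unwind the definitions and reduce everything to the universal property of the terminal object $\R^{0|0}$. First I would recall that $M(-)$ is the contravariant hom-functor $\Hom_{\catname{SMan}}(-,M)$, so the morphism $\psi \in \Hom_{\catname{SMan}}(T',T)$ induces the map of sets $\psi^M : M(T) \to M(T')$ given by precomposition, $f \mapsto f \circ \psi$. Next I would invoke the description of the distinguished points from the previous proposition: $\mathsf{m}_T = \mathsf{m} \circ {!}_T$ and $\mathsf{m}_{T'} = \mathsf{m} \circ {!}_{T'}$, where ${!}_T : T \to \R^{0|0}$ and ${!}_{T'} : T' \to \R^{0|0}$ are the unique morphisms to the terminal object.

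The computation is then immediate. Applying $\psi^M$ to the distinguished point gives
$$
\psi^M(\mathsf{m}_T) = \mathsf{m}_T \circ \psi = (\mathsf{m} \circ {!}_T) \circ \psi = \mathsf{m} \circ ({!}_T \circ \psi)\,,
$$
using associativity of composition in $\catname{SMan}$. Now ${!}_T \circ \psi$ is a morphism $T' \to \R^{0|0}$, and since $\R^{0|0}$ is terminal there is exactly one such morphism, namely ${!}_{T'}$; hence ${!}_T \circ \psi = {!}_{T'}$. Substituting, $\psi^M(\mathsf{m}_T) = \mathsf{m} \circ {!}_{T'} = \mathsf{m}_{T'}$, which is precisely the claim that $\psi^M$ is a morphism of pointed sets.

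There is no real obstacle here: the statement is a formal consequence of terminality and functoriality, and the only point requiring any care is to correctly record the variance of $M(-)$ so that $\psi^M$ is genuinely precomposition by $\psi$ (rather than postcomposition), and to have the previous proposition's formula $\mathsf{m}_T = \mathsf{m}\circ{!}_T$ on hand. If one wished, one could phrase the whole argument diagrammatically by noting that the triangle with vertices $T'$, $T$, $\R^{0|0}$ and edges $\psi$, ${!}_T$, ${!}_{T'}$ commutes, and then pasting $\mathsf{m}$ onto it.
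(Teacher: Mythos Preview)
Your proof is correct and follows essentially the same approach as the paper: both compute $\psi^M(\mathsf{m}_T) = \mathsf{m}\circ({!}_T\circ\psi)$ and then identify ${!}_T\circ\psi$ with ${!}_{T'}$ by terminality of $\R^{0|0}$. The only cosmetic difference is that the paper additionally spells out ${!}_T\circ\psi = {!}_{T'}$ at the level of global algebras (using that $\psi^\#$ preserves the unit), whereas you invoke the universal property directly; your version is in fact the cleaner of the two.
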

\begin{proof}
The only thing we need to check is that $\mathsf{m}_T \mapsto \mathsf{m}_{T'}$. Under change of parametrisation $\mathsf{m}_T = \mathsf{m}\circ !_T \mapsto \mathsf{m} \circ !_T \circ \psi$. Note that, as we have a terminal object, the map $!_T \circ \psi : T' \longrightarrow \R^{0|0}$ is unique. In terms of global algebra 
\begin{align*}
&\R \stackrel{!_T^\#}{\longrightarrow}  \cO_T(|T|)\stackrel{\psi^\#}{\longrightarrow} \cO_{T'}(|T'|)&\\
& r \cdot 1 \longmapsto r \cdot \Id_T \longrightarrow r\cdot \psi^\#(\Id_T)= r \cdot \Id_{T'}\,,&
\end{align*}
as $\psi^\#$ is a homomorphism of unital algebras.  We have established that $!_T \circ \psi = !_{T'}$. Thus, $\mathsf{m}_T \mapsto \mathsf{m}_{T'}$ under $\psi$.
\end{proof}
\begin{definition}
A \emph{pointed supermanifold morphism} $\phi : (M,\mathsf{m}) \rightarrow (N,\mathsf{n})$  is a morphism of supermanifolds such that $\phi_T : M(T)\rightarrow N(T)$ is a pointed map (preserves the distinguished points) for all $T \in \catname{SMan}$, i.e. $\phi \circ \mathsf{m} =\mathsf{n}$.
\end{definition} 
The discussion above allows us to view pointed supermanifolds as functors
$$ M(-):  \catname{SMan}^{\textrm{op}}\rightarrow \catname{Set}_*\,. $$
%
\section{Lie Supersemiheaps and Related Functors}
\subsection{Lie Supersemiheaps}
We define a Lie supersemiheap as a supermanifold $S$ whose functor of $T$-points takes values in $\catname{SHeap}$. Or, in other words, we define Lie supersemiheaps as functors 
$$S :  \catname{SMan}^{\textrm{op}}\rightarrow \catname{SHeap}\,. $$
More formally; 
\begin{definition}
A supermanifold $S$ is a \emph{Lie supersemiheap} if for any $T \in \catname{SMan}$, the set of $T$-points $S(T)$ is a semiheap, and if for any $\psi : T'  \rightarrow T$ the corresponding map $\psi^S : S(T)\rightarrow S(T')$ is a semiheap homomorphism.  Let $S$ and $S'$ be Lie supersemiheaps. A morphism $\phi : S \rightarrow S'$ is a \emph{Lie supersemiheap homomorphism} if the components of the induced natural transformation $\phi_{-} : S(-)\rightarrow S'(-)$ are all semiheap homomorphisms. The resulting category of  Lie supersemiheaps we will denote as $\catname{LieSuperSHp}$. The full subcategory of super Lie heaps we denote as $\catname{LieSuperHp}$.
\end{definition}
\begin{remark}
In \cite{Bruce:2024}, the categories of Lie semiheaps and Lie heaps were denoted as $\catname{LieSHp}$ and   $\catname{LieHp}$, respectively.  We have inserted ``Super'' rather than just ``S'' in the designation of the `super categories' to avoid notational clashes.
\end{remark}
\begin{example}
The supermanifold $\R^{0|1} = (\{*\}, \Lambda(\theta))$ can be given the structure of a Lie supersemiheap as follows. We can consider global coordinate $\theta$ on $\R^{0|1}$. Then
$$\R^{0 |1}(T) \cong \Hom(\Lambda(\theta), \cO_T(|T|))\,,$$
allows us to consider $T$-points as pullbacks of the coordinate, i.e., as odd elements of the superalgebra  $\cO_T(|T|)$. We then write $\vartheta \in \R^{1|1}(T)$ for a general $T$-point. The semiheap product is 
$$[\vartheta_1, \vartheta_2, \vartheta_3] = \vartheta_1+ \vartheta_2+ \vartheta_3 \,,$$
where the addition is in the algebra $\cO_T(|T|)$ The  change of parametrisation $\psi : T' \rightarrow T$ is completely encoded in the algebra homomorphism $\psi^\# : \cO_T(|T|) \rightarrow \cO_T'(|T'|)$. We thus observe,
$$\psi^\#[\vartheta_1, \vartheta_2, \vartheta_3] = \psi^\#(\vartheta_1)+ \psi^\#(\vartheta_2)+ \psi^\#(\vartheta_3) = [\psi^\#(\vartheta_1), \psi^\#(\vartheta_2),\psi^\#(\vartheta_3)] \,,$$
and so we have defined a Lie semiheap structure on $\R^{0|1}$.
\end{example}
\begin{remark}
In the above example, we have a Lie supersemiheap and not a Lie superheap structure. If we change the sign in the ternary product from `$+++$' to `$+-+$', then we will have a Lie superheap.
\end{remark}
Given a Lie supersemiheap and any $T\in \catname{Sman}$, we have a ternary operation 
$$\mu_T : S^{(3)}(T)  \rightarrow S(T)\,,$$
that defines a semiheap. We have used the shorthand  $S^{(k)} :=  S \times S \times \cdots \times S$, where there are $ k$ factors, and we will use similar notation for maps. We thus have the following commutative diagram that encodes the para-associativity condition (this was first presented in \cite{Bruce:2024}). 
\begin{center}
\leavevmode
\begin{xy}
(0,15)*+{S^{(2)}(T)\times S^{(3)}(T)}="a"; (50,15)*+{S(T) \times S^{(3)}(T)\times S(T)}="b"; (100,15)*+{S^{(3)}(T) \times S^{(2)}(T)}="c";%
(0,-10)*+{S^{(3)}(T)}="d"; (50,-10)*+{S^{(3)}(T)}="e"; (100,-10)*+{S^{(3)}(T)}="f";%
(50, -25)*+{S(T)}="g";
{   \ar@{=} "a";"b"}; {\ar@{=} "b";"c"} ;%
{\ar "a";"d"}?*!/^10mm/{\Id^{(2)}_{S(T)}\times \mu_T }; {\ar "b";"e"}?*!/^22mm/{\Id_{S(T)}\times(\mu_T \circ s_{13})\times \Id_{S(T)}} ; {\ar "c";"f"}?*!/^10mm/{ \mu_T \times \Id^{(2)}_{S(T)}};%
{\ar "d";"g"}?*!/^3mm/{\mu_T};{\ar "e";"g"}?*!/^3mm/{\mu_T}; {\ar "f";"g"}?*!/_3mm/{\mu_T};%
\end{xy}
\end{center}
We understand the map $s_{12}$ as interchanging the first and third copy of $S(T)$, i.e.,
$${\color{red}S(T)}\times S(T) \times {\color{blue}S(T)}\longmapsto {\color{blue}S(T)}\times S(T) \times {\color{red}S(T)}\,.$$
If we have a  Lie superheap, then we have the extra condition
\begin{equation}\label{eqn:HeapCon}
\mu_T \circ (\Id_{S(T)}\times \Delta_T) = \textrm{Prj}_1\,, \qquad \mu_T \circ ( \Delta_T \times \Id_{S(T)}) = \textrm{Prj}_2\,,
\end{equation}
where $\Delta_T : S(T)\rightarrow S(T)\times S(T)$ is the diagonal map and $\textrm{Prj}_i$ is the projection onto the $i$-th factor.\par 
By definition, $\mu_{-}: S^{(3)}(-) \rightarrow S(-)$ is a natural transformation. Thus, Yoneda's lemma says that there exists a corresponding map as supermanifolds $\mu : S^{(3)} \rightarrow S$ that also satisfies the diagram above. If we have a heap, then \eqref{eqn:HeapCon} is also satisfied.  Underlying this is the reduced map $|\mu|: |S|\times |S|\times |S| \rightarrow  |S|$, and this map defines a smooth semiheap structure on the reduced manifold. We thus have the following result. 
\begin{proposition}
Let $S$ be a Lie supersemiheap, then the reduced manifold $|S|$ is a Lie semiheap.   Furthermore, if $S$ is a Lie superheap, then $|S|$ is a Lie heap.
\end{proposition}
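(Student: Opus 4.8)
The plan is to pass from the $T$-point description of $S$ to a statement about the supermanifold $\mu : S^{(3)} \to S$ via Yoneda, and then restrict to reduced manifolds. First I would invoke the observation already recorded in the excerpt: since $\mu_{-} : S^{(3)}(-) \to S(-)$ is a natural transformation of set-valued functors, Yoneda's lemma produces a unique morphism of supermanifolds $\mu : S^{(3)} \to S$ inducing it, and the para-associativity diagram displayed above (which commutes at every $T$) transports, again by Yoneda/faithfulness, to the corresponding commutative diagram of supermanifolds. The point is that an equality of natural transformations $S^{(5)}(-) \to S(-)$ is equivalent to an equality of the corresponding supermanifold morphisms $S^{(5)} \to S$; this is exactly the ``fully faithful'' half of Yoneda and requires no computation.

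Next I would apply the reduction functor $M \mapsto |M|$, which is a product-preserving functor $\catname{SMan} \to \catname{Man}$ (the reduced manifold of a Cartesian product of supermanifolds is the Cartesian product of the reduced manifolds, since this can be checked in local charts $\R^{n|m}$, whose reduction is $\R^n$). Applying this functor to $\mu : S^{(3)} \to S$ yields a smooth map $|\mu| : |S| \times |S| \times |S| \to |S|$, and applying it to the para-associativity diagram — which is a finite diagram of morphisms in $\catname{SMan}$ built from $\mu$, identities, and the coordinate swap $s_{13}$ — yields the analogous commutative diagram in $\catname{Man}$, because a functor sends commutative diagrams to commutative diagrams. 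The swap $s_{13}$ reduces to the corresponding swap of factors on $|S|^{(3)}$, and identities reduce to identities, so the reduced diagram is precisely the para-associativity diagram for $|\mu|$. Hence $(|S|, |\mu|)$ is a Lie semiheap in the sense of \cite{Bruce:2024}. For the heap case, I would additionally reduce the two identities in \eqref{eqn:HeapCon}: the diagonal $\Delta : S \to S \times S$ reduces to the diagonal of $|S|$, and the projections $\mathrm{Prj}_i$ reduce to projections, so \eqref{eqn:HeapCon} descends to the heap axioms $|\mu|(x,x,y) = y$, $|\mu|(x,y,y) = x$ for $|S|$.

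The only genuinely substantive point — and the one I would state carefully rather than wave at — is that the reduction functor preserves finite products and hence commutative diagrams; everything else is formal. In practice I expect this ``obstacle'' to be mild: product-preservation of reduction is standard (e.g.\ \cite{Carmeli:2011}), and once it is in hand the proposition is a one-line consequence of functoriality. One subtlety worth a remark is that the swap morphism $s_{13}$ used in the diagram is an isomorphism of supermanifolds, and reduction sends it to the honest coordinate permutation on $|S|^{(3)}$; likewise the reduction of $\mu \circ s_{13}$ is $|\mu| \circ |s_{13}|$, so the middle vertical arrow of the reduced diagram has exactly the form required by the reduced para-associativity axiom. With these identifications the reduced diagram literally is the defining diagram of a Lie (semi)heap structure on $|S|$, completing the proof.
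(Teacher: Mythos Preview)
Your proposal is correct and follows essentially the same route as the paper: the proposition is stated as an immediate consequence of the preceding paragraph, which invokes Yoneda to produce the supermanifold morphism $\mu : S^{(3)} \to S$ satisfying the para-associativity diagram and then passes to the reduced map $|\mu|$. Your write-up simply makes explicit the points the paper leaves implicit --- that reduction is a product-preserving functor and hence carries the diagrammatic axioms (including $s_{13}$, $\Delta$, and $\mathrm{Prj}_i$) to the corresponding diagrams on $|S|$ --- so there is nothing to correct.
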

We will need the more restrictive notion of a pointed Lie supersemiheap.
\begin{definition}
A \emph{pointed Lie supersemiheap} is a pointed supermanifold $(S, \mathsf{e})$, where $S$ is a Lie supersemiheap.  A \emph{pointed  Lie supersemiheap homomorphism} $\phi :(S,\mathsf{e})\rightarrow (S',\mathsf{e}')$ is a Lie supersemiheap homomorphism that is also a pointed supermanifold morphism.  The resulting category we denote as $ \catname{LieSuperSHp}_*$. The full subcategory of pointed Lie superheaps we denote as $ \catname{LieSuperHeap}_*$.
\end{definition}
\subsection{Heapification and Groupification Functors} From this point on, we will restrict our attention to heaps rather than semiheaps. We will generalise the well-known constructions relating heaps and groups to the setting of supermanifolds. \par 
Consider a Lie supergroup $G\in \catname{LSGrp}$, i.e., $G$ as a functor $G(-) : \catname{SMan}^{\textrm{op}}\rightarrow \catname{Grp}$. For any $T\in \catname{SMan}$, we define
\begin{equation}
[\mathsf{x},\mathsf{y},\mathsf{z}]:= \mathsf{x} \mathsf{y}^{-1} \mathsf{z}\,,
\end{equation}
on the set $G(T)$. It is a classical result that this ternary operation defines a heap. We need to check that any $\psi : T' \rightarrow T$ gives rise to a heap homomorphism $\psi^G : G(T)\rightarrow G(T')$ if we want to construct a  Lie superheap. By definition of a  Lie supergroup, $\psi^G $ is a group homomorphism, specifically $\psi^G(\mathsf{xy})= \psi^G(\mathsf{x})\psi^G(\mathsf{y})$ and $\psi^G (\mathsf{x}^{-1}) = (\psi^G(\mathsf{x}))^{-1}$.  Mapping identity elements to identity elements also holds, but is not part of the construction of a  Lie superheap. Observe that
\begin{align*}
\psi^G[\mathsf{x},\mathsf{y},\mathsf{z}] &= \psi^G(\mathsf{x} \mathsf{y}^{-1} \mathsf{z}) = \psi^G(\mathsf{x}) \psi^G(\mathsf{y}^{-1}) \psi^G(\mathsf{z})  = \psi^G(\mathsf{x}) (\psi^G(\mathsf{y}))^{-1} \psi^G(\mathsf{z})\\ 
&= [\psi^G(\mathsf{x}),\psi^G(\mathsf{y}),\psi^G(\mathsf{z})]\,.
\end{align*}
Thus, $G \in \catname{SLHeap}$. By considering the identity map $\mathsf{e}: \R^{0|0} \rightarrow G$ of the  Lie supergroup, we obtain a pointed  Lie superheap. Note that $\psi^G(\mathsf{e}_T) = \mathsf{e}_{T'}$. \par 
\begin{example}[Translation supergroup]\label{exp:TransGp}
The supermanifold $\R^{1|1} = (\R, C^\infty_{\R}(-)\otimes \Lambda(\theta))$ is a Lie supergroup as follows. We can consider global coordinates $(x, \theta)$ on $\R^{1|1}$. Then
$$\R^{1 |1}(T) \cong \Hom(\cO_{\R^{1|1}}(\R), \cO_T(|T|))\,,$$
allows us to consider $T$-points as pullbacks of the coordinates, i.e., as elements of the superalgebra  $\cO_T(|T|)$. We then write $(\mathsf{x}, \vartheta)\in \R^{1|1}(T)$ for a general $T$-point. The group multiplication is 
$$(\mathsf{x}_1, \vartheta_1)(\mathsf{x}_2, \vartheta_2)= (\mathsf{x}_1 + \mathsf{x}_2 + \vartheta_1 \vartheta_2 , \vartheta_1 + \vartheta_2)\,,$$
and the inverse operation is easily seen to be $(\mathsf{x}, \vartheta) \mapsto  (- \mathsf{x}, - \vartheta)$. The identity is thus $(\mathsf{1}, \mathsf{0})$, where $\mathsf{1}$ is the identity function and $\mathsf{0}$ is the zero function in $\cO_T(|T|)$. Applying the previous construction, the heap structure is
$$[(\mathsf{x}_1, \vartheta_1), (\mathsf{x}_2, \vartheta_2), (\mathsf{x}_3, \vartheta_3)] = (\mathsf{x}_1 - \mathsf{x}_2 +\mathsf{x}_3 - \vartheta_1\vartheta_2 +\vartheta_1\vartheta_3 - \vartheta_2\vartheta_3, \vartheta_1 - \vartheta_2 + \vartheta_3 )\,.$$
\end{example}
\begin{example}[Multiplicative supergroup]\label{exp:MultGp}
The supermanifold $\R_*^{1|1} = (\R_*, C^\infty_{\R_*}(-)\otimes \Lambda(\theta))$, where $\R_* = \R -\{0\}\simeq \textrm{GL}(1)$, is a Lie supergroup as follows. 
Using the set-up of the previous example, we write $(\mathsf{x}, \vartheta)\in \R^{1|1}(T)$ for a general $T$-point. The group multiplication is 
$$(\mathsf{x}_1, \vartheta_1)(\mathsf{x}_2, \vartheta_2)= (\mathsf{x}_1 \mathsf{x}_2 + \vartheta_1 \vartheta_2 , \mathsf{x}_1 \vartheta_2 + \vartheta_1 \mathsf{x}_2)\,.$$
The identity is  $(\mathsf{1}, \mathsf{0})$, where $\mathsf{1}$ is the identity function and $\mathsf{0}$ is the zero function in $\cO_T(|T|)$. The inverse operation is easily seen to be
$$(\mathsf{x}, \vartheta) \mapsto  \left(\frac{1}{\mathsf{x}}, - \frac{1}{\mathsf{x}^2}\vartheta \right )\,. $$ 
Applying the previous construction, the heap structure is
$$[(\mathsf{x}_1, \vartheta_1), (\mathsf{x}_2, \vartheta_2), (\mathsf{x}_3, \vartheta_3)] =\left( \frac{\mathsf{x}_1\mathsf{x}_3}{\mathsf{x}_2} - \frac{ \vartheta_1  \vartheta_2 \mathsf{x}_3}{\mathsf{x}_2^2}
- \frac{  \mathsf{x}_1 \vartheta_2  \vartheta_3}{\mathsf{x}_2^2}  + \frac{\vartheta_1 \vartheta_3}{\mathsf{x}_2}, \frac{\mathsf{x}_1\vartheta_3}{\mathsf{x}_2} - \frac{\vartheta_1 \vartheta_2 \vartheta_3}{\mathsf{x}_2^2} - \frac{\mathsf{x}_1\vartheta_2\mathsf{x}_3}{\mathsf{x}_2^2}+ \frac{\vartheta_1 \mathsf{x}_3}{\mathsf{x}_2}\right)\,.$$
\end{example}
A homorphism of  Lie supergroups $\phi : G \rightarrow G'$ is also a morphism of pointed Lie superheaps. This directly follows from the classical result by considering the group homomorphism $\phi_T :G(T) \rightarrow G'(T)$. 
\begin{definition}
The {heapification functor} is the functor $\mathcal{H}: \catname{LSGrp} \rightarrow \catname{LieSuperHp}_*$ that acts as follows:
\begin{description}
   \item[Objects] $(G, \mathsf{m}, \mathsf{i} ,\mathsf{e}) \longmapsto (G, \mu, \mathsf{e})$, where $[\mathsf{x},\mathsf{y} ,\mathsf{z}] = \mathsf{x}\mathsf{y}^{-1}\mathsf{z}$, for all $\mathsf{x},\mathsf{y},\mathsf{x} \in G(T)$ and all $T \in \catname{SMan}$.
   \item[Morphisms]  $\mathcal{H}(\phi) = \phi$, or all $\phi : G \rightarrow G'$.
\end{description}
\end{definition}
Recall that a functor is full if it is surjective on the hom sets and is faithful if it is injective on the hom sets. A functor is said to be fully faithful if it is full and faithful, i.e., is a bijection between the hom sets. 
\begin{proposition}
The heapification functor is fully faithful.
\end{proposition}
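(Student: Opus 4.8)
The plan is to verify the two conditions—faithfulness and fullness—directly from the definition of $\mathcal{H}$ on morphisms, which is simply $\mathcal{H}(\phi) = \phi$. Since $\mathcal{H}$ acts as the identity on morphisms (it only changes how we package the object data), faithfulness is essentially immediate: if $\mathcal{H}(\phi) = \mathcal{H}(\psi)$ as morphisms of pointed Lie superheaps, then $\phi = \psi$ as the underlying supermanifold morphisms, hence as Lie supergroup homomorphisms. The only thing worth a sentence is that distinct supergroup homomorphisms $G \to G'$ remain distinct as supermanifold morphisms (there is no quotient being taken), so $\mathcal{H}$ is injective on each hom-set.

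The substantive direction is fullness: given $G, G' \in \catname{LSGrp}$ and a morphism $f : \mathcal{H}(G) \to \mathcal{H}(G')$ of pointed Lie superheaps, I must produce a Lie supergroup homomorphism $\phi : G \to G'$ with $\mathcal{H}(\phi) = \phi = f$. Concretely, $f$ is a morphism of supermanifolds such that each $f_T : G(T) \to G'(T)$ is a heap homomorphism and $f_T(\mathsf{e}_T) = \mathsf{e}'_{T'}$—that is, $f_T$ preserves the distinguished point coming from the unit. I claim $f$ is automatically a group homomorphism at each level $T$. This is the classical pointed-heap-to-group fact applied levelwise: for $\mathsf{x}, \mathsf{y} \in G(T)$, write $\mathsf{x}\mathsf{y} = [\mathsf{x}, \mathsf{e}_T, \mathsf{y}]$ using $\mathsf{x}\mathsf{y} = \mathsf{x}\,\mathsf{e}_T^{-1}\,\mathsf{y}$, so that
\begin{align*}
f_T(\mathsf{x}\mathsf{y}) &= f_T[\mathsf{x}, \mathsf{e}_T, \mathsf{y}] = [f_T(\mathsf{x}), f_T(\mathsf{e}_T), f_T(\mathsf{y})]' = [f_T(\mathsf{x}), \mathsf{e}'_T, f_T(\mathsf{y})]' = f_T(\mathsf{x})\, f_T(\mathsf{y})\,,
\end{align*}
and similarly $f_T(\mathsf{x}^{-1}) = f_T[\mathsf{e}_T, \mathsf{x}, \mathsf{e}_T] = [\mathsf{e}'_T, f_T(\mathsf{x}), \mathsf{e}'_T]' = (f_T(\mathsf{x}))^{-1}$, using $\mathsf{x}^{-1} = \mathsf{e}_T\,\mathsf{x}^{-1}\,\mathsf{e}_T$. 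Thus every component $f_T$ is a group homomorphism, and since $f$ was already a natural transformation of set-valued functors (being a morphism of supermanifolds, via Yoneda), it is a natural transformation $G(-) \to G'(-)$ of $\catname{Grp}$-valued functors, i.e. a Lie supergroup homomorphism $\phi := f$. Then $\mathcal{H}(\phi) = \phi = f$ by definition, so $\mathcal{H}$ is surjective on hom-sets.

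I expect the main (minor) obstacle to be bookkeeping with the functor of points: one must be careful that "morphism of pointed Lie superheaps" unwinds to exactly the levelwise data used above—each $f_T$ a heap homomorphism preserving $\mathsf{e}_T$, compatible with change of parametrisation—and that the levelwise group homomorphisms assemble back into a genuine supermanifold morphism. Both points are handled by Yoneda's lemma (already invoked in the excerpt) together with the earlier propositions on pointed supermanifolds, which guarantee $\mathsf{e}_T \mapsto \mathsf{e}_{T'}$ under $\psi$; no new supergeometric input is needed. The argument is the functor-of-points transcription of the classical statement that $\mathcal{H}$ is fully faithful on groups, and the only care required is to keep the distinguished points aligned throughout.
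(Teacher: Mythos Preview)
Your proposal is correct and follows essentially the same approach as the paper: faithfulness is immediate because $\mathcal{H}$ acts as the identity on morphisms, and fullness is obtained by specialising the heap-homomorphism identity at $\mathsf{e}_T$ to recover the group-homomorphism conditions levelwise. The only cosmetic differences are that you write $\mathsf{x}\mathsf{y}=[\mathsf{x},\mathsf{e}_T,\mathsf{y}]$ and then apply the bracket-preserving property, whereas the paper starts from $\bar{\phi}_T(\mathsf{x}\mathsf{y}^{-1}\mathsf{z})=\bar{\phi}_T(\mathsf{x})(\bar{\phi}_T(\mathsf{y}))^{-1}\bar{\phi}_T(\mathsf{z})$ and substitutes; and there is a small typo in your text ($\mathsf{e}'_{T'}$ should be $\mathsf{e}'_T$).
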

\begin{proof}
Note that the heapification functor acts trivially on a given  Lie supergroup homomorphism; it is seen as living in a different category. Thus, the heapification functor is faithful. The fullness property needs to be examined. We have to check that any pointed  Lie superheap homomorphism $\bar{\phi}: G \rightarrow G'$ is also a Lie supergroup homomorphism. Fixing a $T \in \catname{SMan}$, we observe that
$$\bar{\phi}_T(\mathsf{e}_T) = \mathsf{e}_T'\,,$$
by definition. As a heap homomorphism, it must be the case that
$$ \bar{\phi}_T(\mathsf{x} \mathsf{y}^{-1} \mathsf{z}) = \bar{\phi}_T(\mathsf{x})\big(\bar{\phi}_T(\mathsf{y})\big)^{-1}\bar{\phi}_T(\mathsf{z})\,,$$
for all $\mathsf{x},\mathsf{y},\mathsf{z} \in G(T)$. Firstly, setting $\mathsf{x} =\mathsf{z} = \mathsf{e}_T$   gives 
$$\bar{\phi}_T(\mathsf{y}^{-1}) = \big(\bar{\phi}_T(\mathsf{y})\big)^{-1}\,.$$
Secondly, setting  $\mathsf{y} = \mathsf{e}_T$ gives
$$\bar{\phi}_T(\mathsf{x} \mathsf{z}) = \bar{\phi}_T(\mathsf{x})\bar{\phi}_T(\mathsf{z})\,.$$
Thus, we have a Lie supergroup homomorphism, and so the heapification functor is full. 
\end{proof}
In the other direction, it is well known that given a heap, one can construct a group by selecting a point. Moreover, the selected point is the identity in the constructed group, and groups defined by other choices of the point are isomorphic. We will restrict attention to pointed  Lie superheaps, i.e., the point we consider is privileged. \par 
Consider a pointed  Lie superheap $(S,\mathsf{e})$. For any $T\in \catname{SMan}$, we can define a group on $S(T)$ as
$$\mathsf{xy} :=[\mathsf{x},\mathsf{e}_T,\mathsf{y}]\,, \qquad \mathsf{x}^{-1} := [\mathsf{e}_T,\mathsf{x},\mathsf{e}_T]\,,$$
and the identity element is $\mathsf{e}_T$.  The question is whether we have the right functorial properties to define a Lie supergroup.  That is, given $\psi : T' \rightarrow T$, is the induced map $\psi^{S}:S(T)\rightarrow S(T')$ a group homomorphism?\par 
Under the reparametrisation we have $\psi^S(\mathsf{e}_T) = \mathsf{e}_{T'}$, then using this we see that 
\begin{align*}
& \psi^S(\mathsf{xy}) = \psi^S[\mathsf{x},\mathsf{e}_T,\mathsf{y}] = [\psi^S(\mathsf{x}),\psi^S(\mathsf{e}_T),\psi^S(\mathsf{y})] =  [\psi^S(\mathsf{x}),\mathsf{e}_{T'},\psi^S(\mathsf{y})] =  \psi^S(\mathsf{x})\psi^S(\mathsf{y})\,,\\
& \psi^S(\mathsf{x}^{-1}) = \psi^S [\mathsf{e}_T,\mathsf{x},\mathsf{e}_T] = [\mathsf{e}_{T'},\psi^S(\mathsf{x}),\mathsf{e}_{T'}] = \big(\psi^S(\mathsf{x})\big)^{-1}\,.
\end{align*}
Thus, the above construction produces a  Lie supergroup.
\begin{remark}
We also have a group structure on $S(T)$ for some fixed $T \in \catname{SMan}$ by choosing any element in $S(T)$. The classical construction shows this. However, the arbitrary $T$-point is not inherently tied to a point of $S$ nor a $T'$-point of $S(T')$, and so we do not obtain a functor. 
\end{remark}
Morphisms of pointed  Lie superheaps $\phi :(S,\mathsf{e}) \rightarrow (S', \mathsf{e}')$ preserve the underlying privileged $T$-points, i.e., $\phi_T(\mathsf{e}_T) = \mathsf{e}'_T$. Using this fact, the classical result shows that $\phi_T : S(T)\rightarrow S'(T)$ is a group homomorphism.
\begin{definition}
The {groupification functor} is the functor $\mathcal{G}: \catname{LieSuperHp}_* \rightarrow  \catname{LSGrp}$ that acts as follows:
\begin{description}
   \item[Objects] $  (S, \mu, \mathsf{e})\longmapsto (S, \mathsf{m}, \mathsf{i} , \mathsf{e}) $, where $\mathsf{xy}:= [\mathsf{x}, \mathsf{e}_T,\mathsf{y}]$, $\mathsf{x}^{-1}:= [\mathsf{e}_T, \mathsf{x}, \mathsf{e}_T]$, for all $\mathsf{x},\mathsf{y} \in G(T)$ and all $T \in \catname{SMan}$.
   \item[Morphisms]  $\mathcal{G}(\phi) = \phi$, or all $\phi : S \rightarrow S'$.
\end{description}
\end{definition}
\begin{proposition}
The groupification functor is fully faithful.
\end{proposition}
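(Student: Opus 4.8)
The plan is to mirror the structure of the proof that the heapification functor is fully faithful, exploiting the fact that the groupification functor also acts trivially on morphisms. Faithfulness is immediate: since $\mathcal{G}(\phi) = \phi$ for every pointed Lie superheap homomorphism $\phi : S \to S'$, the map on hom-sets $\Hom_{\catname{LieSuperHp}_*}(S,S') \to \Hom_{\catname{LSGrp}}(\mathcal{G}(S),\mathcal{G}(S'))$ is injective (indeed it is the identity on underlying morphisms of supermanifolds). So the real content is fullness: given Lie supergroups $\mathcal{G}(S)$ and $\mathcal{G}(S')$ arising from pointed Lie superheaps $(S,\mathsf{e})$ and $(S',\mathsf{e}')$, and given a Lie supergroup homomorphism $\bar{\phi} : \mathcal{G}(S) \to \mathcal{G}(S')$, we must show $\bar{\phi}$ is also a pointed Lie superheap homomorphism $(S,\mathsf{e}) \to (S',\mathsf{e}')$.

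Fix $T \in \catname{SMan}$ and work with the induced group homomorphism $\bar{\phi}_T : S(T) \to S'(T)$. Since group homomorphisms preserve identities, $\bar{\phi}_T(\mathsf{e}_T) = \mathsf{e}'_T$, which is precisely the pointedness condition. It remains to check that $\bar{\phi}_T$ respects the ternary heap product. Using the recovered group structure on $S(T)$ and $S'(T)$, one rewrites the heap product in terms of the group operations: since $\mathsf{xy} = [\mathsf{x},\mathsf{e}_T,\mathsf{y}]$ and $\mathsf{x}^{-1} = [\mathsf{e}_T,\mathsf{x},\mathsf{e}_T]$, the classical heap-from-group identity $[\mathsf{x},\mathsf{y},\mathsf{z}] = \mathsf{x}\mathsf{y}^{-1}\mathsf{z}$ holds in $S(T)$ (and likewise in $S'(T)$). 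Then
\begin{align*}
\bar{\phi}_T[\mathsf{x},\mathsf{y},\mathsf{z}] &= \bar{\phi}_T(\mathsf{x}\mathsf{y}^{-1}\mathsf{z}) = \bar{\phi}_T(\mathsf{x})\,\bar{\phi}_T(\mathsf{y})^{-1}\,\bar{\phi}_T(\mathsf{z}) \\
&= [\bar{\phi}_T(\mathsf{x}),\bar{\phi}_T(\mathsf{y}),\bar{\phi}_T(\mathsf{z})]\,,
\end{align*}
so $\bar{\phi}_T$ is a heap homomorphism. As this holds for all $T$ and the $\bar{\phi}_T$ assemble into the natural transformation induced by $\bar{\phi}$, we conclude $\bar{\phi}$ is a Lie superheap homomorphism; together with pointedness it is a pointed Lie superheap homomorphism. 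Hence $\mathcal{G}(\bar{\phi}) = \bar{\phi}$ exhibits $\bar{\phi}$ as being in the image of $\mathcal{G}$ on hom-sets, proving fullness.

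The one subtlety to verify carefully is that the classical heap-from-group identity genuinely holds for the group structure reconstructed from $(S,\mathsf{e})$ — that is, that $[\mathsf{x},\mathsf{e}_T,\mathsf{y}^{-1}]$-type manipulations using para-associativity and the heap axioms $[\mathsf{a},\mathsf{a},\mathsf{b}] = [\mathsf{b},\mathsf{a},\mathsf{a}] = \mathsf{b}$ recover $[\mathsf{x},\mathsf{y},\mathsf{z}] = \mathsf{x}\mathsf{y}^{-1}\mathsf{z}$ with the defined product and inverse. This is the standard algebraic fact underlying the heap/group correspondence (cf. Hollings \& Lawson \cite{Hollings:2017}), applied pointwise at each $T(T)$; the functor-of-points formalism adds no new difficulty here since everything takes place inside the ordinary set $S(T)$. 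I expect this verification — spelling out the para-associative computation — to be the only place requiring care, and it is routine; no genuine obstacle arises.
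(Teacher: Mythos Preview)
Your proposal is correct and follows essentially the same route as the paper: faithfulness is immediate because $\mathcal{G}$ acts as the identity on underlying morphisms, and fullness is obtained by fixing $T$, noting $\bar{\phi}_T(\mathsf{e}_T)=\mathsf{e}'_T$, invoking the standard algebraic identity $[\mathsf{x},\mathsf{y},\mathsf{z}]=\mathsf{x}\mathsf{y}^{-1}\mathsf{z}$ for the group reconstructed from the heap, and then pushing through the group homomorphism. The paper's proof is the same argument, only dismissing the verification of that identity as ``a quick calculation'' where you flag it explicitly as the one place requiring care.
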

\begin{proof}
The groupification functor acts trivially on a pointed Lie superheap homomorphism; it is just considered as being in a different category. Thus, the groupification is faithful. The fullness property needs examining. We have to check that any  Lie supergroup homomorphism  $\bar{\phi} : H \rightarrow H'$ is also a pointed  Lie superheap homomorphism. Fixing a $T \in \catname{SMan}$, we observe that
$$\bar{\phi}_T(\mathsf{e}_T) = \mathsf{e}_T'\,,$$
A quick calculation show that $\mathsf{x y}^{-1}\mathsf{z} = [\mathsf{x},\mathsf{y},\mathsf{z}]$ in $H(T)$. Then 
$$ \bar{\phi}([\mathsf{x},\mathsf{y},\mathsf{z}]) = \bar{\phi}(\mathsf{x y}^{-1}\mathsf{z}) = \bar{\phi}(\mathsf{x}) \big(\bar{\phi}(\mathsf{y})\big)^{-1}\bar{\phi}(\mathsf{z}) = [\bar{\phi}(\mathsf{x}),\bar{\phi}(\mathsf{y}),\bar{\phi}(\mathsf{z})]\,,$$
which established the desired result.
\end{proof}
\begin{remark}
If we do not insist on pointed Lie superheaps, then we cannot expect the groupification functor to be fully faithful, as there is no reason why $\bar{\phi}(\mathsf{e}_T) =  \mathsf{e}'_T$ for an arbitrary Lie superheap homomorphism. 
\end{remark}
\begin{example} By setting $(\mathsf{x}_2, \vartheta_2)= (\mathsf{1}, \mathsf{0})$ in the expressions for the heap structure of both Example \ref{exp:TransGp} and Example \ref{exp:MultGp}, the reader can quickly observe that we recover the initial group multiplication.
\end{example}
Two categories $\catname{C}$ and $\catname{D}$ are isomorphic if there exists two functors $\mathrm{F} : \catname{C} \rightarrow \catname{D}$ and $\mathrm{G} : \catname{D} \rightarrow \catname{C}$ such that $\mathrm{F}\mathrm{G} = \Id_D$ and $\mathrm{G}\mathrm{F} = \Id_C$. That is, there is a one-to-one correspondence between objects and morphisms. From the standard algebraic result, it can be seen directly that the heapification and groupification functors are mutual inverses. We have the following theorem.
\begin{theorem} \label{trm:IsoCats}
There is an isomorphism of categories between $\catname{LSGrp}$ and $\catname{LieSuperHp}_*$.
\end{theorem}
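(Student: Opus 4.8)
The plan is to show that the heapification functor $\mathcal{H}$ and the groupification functor $\mathcal{G}$ are mutually inverse, which by the preceding propositions (each is fully faithful) reduces entirely to checking the two composites agree with the identity functors on objects; on morphisms both functors act as the identity, so $\mathcal{H}\mathcal{G}(\phi) = \phi = \mathcal{G}\mathcal{H}(\phi)$ is immediate. Thus the whole content is: (i) starting from $(G,\mathsf{m},\mathsf{i},\mathsf{e})\in\catname{LSGrp}$, form the heap $[\mathsf{x},\mathsf{y},\mathsf{z}]=\mathsf{x}\mathsf{y}^{-1}\mathsf{z}$ and then re-groupify using the distinguished point $\mathsf{e}$; we must recover the original multiplication and inverse. (ii) Starting from $(S,\mu,\mathsf{e})\in\catname{LieSuperHp}_*$, groupify to get $\mathsf{x}\mathsf{y}=[\mathsf{x},\mathsf{e}_T,\mathsf{y}]$, $\mathsf{x}^{-1}=[\mathsf{e}_T,\mathsf{x},\mathsf{e}_T]$, and then heapify; we must recover the original ternary product $\mu$.

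For (i): fix $T\in\catname{SMan}$. The re-groupified multiplication on $G(T)$ is $\mathsf{x}\cdot\mathsf{y} = [\mathsf{x},\mathsf{e}_T,\mathsf{y}] = \mathsf{x}\,\mathsf{e}_T^{-1}\,\mathsf{y} = \mathsf{x}\mathsf{y}$ since $\mathsf{e}_T$ is the group identity of $G(T)$ (which it is, because $\mathsf{e}_T = \mathsf{e}\circ{!_T}$ is the image of the unit under the group-valued functor). Likewise the re-groupified inverse is $[\mathsf{e}_T,\mathsf{x},\mathsf{e}_T] = \mathsf{e}_T\,\mathsf{x}^{-1}\,\mathsf{e}_T = \mathsf{x}^{-1}$. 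Since this holds for every $T$ and is natural in $T$, Yoneda gives back exactly the original structure maps $\mathsf{m},\mathsf{i}$, and the distinguished point is untouched, so $\mathcal{G}\mathcal{H} = \Id_{\catname{LSGrp}}$ on objects.

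For (ii): fix $T\in\catname{SMan}$ and write $\mathsf{x}\ast\mathsf{y} := [\mathsf{x},\mathsf{e}_T,\mathsf{y}]$ and $\mathsf{x}^{\dagger} := [\mathsf{e}_T,\mathsf{x},\mathsf{e}_T]$ for the groupified operations on $S(T)$. The re-heapified ternary product is $\mathsf{x}\ast\mathsf{y}^{\dagger}\ast\mathsf{z} = [\,[\mathsf{x},\mathsf{e}_T,[\mathsf{e}_T,\mathsf{y},\mathsf{e}_T]],\mathsf{e}_T,\mathsf{z}\,]$. Now use para-associativity and the heap identities $[\mathsf{a},\mathsf{a},\mathsf{b}] = [\mathsf{b},\mathsf{a},\mathsf{a}] = \mathsf{b}$ (valid in $S(T)$ since $S$ is a Lie superheap). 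Expanding: $[\mathsf{x},\mathsf{e}_T,[\mathsf{e}_T,\mathsf{y},\mathsf{e}_T]] = [[\mathsf{x},\mathsf{e}_T,\mathsf{e}_T],\mathsf{y},\mathsf{e}_T]$ by para-associativity (rewriting the outer bracket to move the inner one), and $[\mathsf{x},\mathsf{e}_T,\mathsf{e}_T] = \mathsf{x}$, giving $[\mathsf{x},\mathsf{y},\mathsf{e}_T]$; then $[[\mathsf{x},\mathsf{y},\mathsf{e}_T],\mathsf{e}_T,\mathsf{z}] = [\mathsf{x},\mathsf{y},[\mathsf{e}_T,\mathsf{e}_T,\mathsf{z}]] = [\mathsf{x},\mathsf{y},\mathsf{z}]$. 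This is precisely the classical computation that $\mathcal{H}\mathcal{G}$ recovers the original heap; it holds for all $T$, is natural, and preserves the distinguished point, so $\mathcal{H}\mathcal{G} = \Id_{\catname{LieSuperHp}_*}$ on objects.

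I do not expect any genuine obstacle: everything is a bracket-pushing exercise performed pointwise on the $T$-points and then transported by Yoneda, and the functoriality/naturality bookkeeping has already been established in the preceding propositions (that reparametrisations are homomorphisms and preserve distinguished points). The only point requiring a word of care is making sure the para-associativity manipulations in (ii) use the bracket-rearrangement in the correct direction — concretely, the middle form $[x_1,[x_4,x_3,x_2],x_5]$ of the para-associativity axiom is what lets the inner bracket migrate — and that the distinguished point $\mathsf{e}_T$ is genuinely the group identity in case (i), which follows because $\mathsf{e}_T$ is the functorial image of the unit morphism $\mathsf{e}\colon\R^{0|0}\to G$. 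Assembling these, the two functors are mutually inverse and Theorem \ref{trm:IsoCats} follows.
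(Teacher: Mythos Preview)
Your proof is correct and follows the same approach as the paper: showing that $\mathcal{H}$ and $\mathcal{G}$ are mutually inverse functors, checked on objects via the $T$-points (morphisms being trivially preserved). The paper itself only remarks that ``from the standard algebraic result, it can be seen directly that the heapification and groupification functors are mutual inverses'' and states the theorem without further proof, so your version supplies the explicit para-associativity manipulations that the paper leaves implicit.
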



\section*{Acknowledgements}  
The author thanks Janusz Grabowski and Rita Fioresi for reading earlier versions of this note.

%

\end{document}